\numberwithin{equation}{section}
\numberwithin{figure}{section}
\theoremstyle{plain}
\newtheorem{thm}{Theorem}
\begin{document}

\title[A numerical implementation of the perturbation theory ...]{A numerical and symbolical approximation of the Nonlinear Anderson
Model}

\author{Yevgeny Krivolapov}

\address{Physics Department, Technion - Israel Institute of Technology, Haifa
32000, Israel.}

\email{evgkr@tx.technion.ac.il}

\author{Shmuel Fishman}

\address{Physics Department, Technion - Israel Institute of Technology, Haifa
32000, Israel.}

\email{fishman@physics.technion.ac.il}

\author{Avy Soffer}

\address{Mathematics Department, Rutgers University, New-Brunswick, NJ 08903,
USA.}

\email{soffer@math.rutgers.edu}

\keywords{Anderson localization, NLSE, random potential, nonlinear Schrodinger,
dynamical localization, diffusion, sub-diffusion}
\begin{abstract}
A modified perturbation theory in the strength of the nonlinear term
is used to solve the Nonlinear Schrödinger Equation with a random
potential. It is demonstrated that in some cases it is more efficient
than other methods. Moreover we obtain error estimates. This approach
can be useful for the solution of other nonlinear differential equations
of physical relevance.
\end{abstract}

\date{December 19, 2009}

\maketitle

\section{Introduction}

We consider the problem of dynamical localization of waves in a Nonlinear
Schrödinger Equation (NLSE) \cite{Sulem1999} with a random potential
term on a lattice:\begin{equation}
i\partial_{t}\psi=-J\left[\psi\left(x+1\right)+\psi\left(x-1\right)\right]+\varepsilon_{x}\psi+\beta\left\vert \psi\right\vert ^{2}\psi,\label{eq:NLSE}\end{equation}
where $\psi=\psi\left(x,t\right),$ $x\in\mathbb{Z};$ and $\left\{ \varepsilon_{x}\right\} $
is a collection of i.i.d. random variables uniformly distributed in
the interval $\left[-\frac{W}{2},\frac{W}{2}\right]$.

The NLSE was derived for a variety of physical systems under some
approximations. It was derived in classical optics where $\psi$ is
the electric field by expanding the index of refraction in powers
of the electric field keeping only the leading nonlinear term \cite{Agrawal2007}.
For Bose-Einstein Condensates (BEC), the NLSE is a mean field approximation
where the term proportional to the density $\beta|\psi|^{2}$ approximates
the interaction between the atoms. In this field the NLSE is known
as the Gross-Pitaevskii Equation (GPE) \cite{Dalfovo1999,Pitaevskii2003,Leggett2001,Pitaevskii1961,Gross1961,Pitaevskii1963}.
Recently, it was rigorously established, for a large variety of interactions
and of physical conditions, that the NLSE (or the GPE) is exact in
the thermodynamic limit \cite{Erdos2007,Lieb2002}. Generalized mean
field theories, in which several mean-fields are used, were recently
developed \cite{Cederbaum2003,Alon2005}. In the absence of randomness
\eqref{eq:NLSE} is completely integrable. For repulsive nonlinearity
$\left(\beta>0\right)$ an initially localized wavepacket spreads,
while for attractive nonlinearity $\left(\beta<0\right)$ solitons
are found typically \cite{Sulem1999}.

For $\beta=0$ this equation reduces to the Anderson model \cite{Anderson1958},\begin{equation}
i\partial_{t}\psi=-J\left[\psi\left(x+1\right)+\psi\left(x-1\right)\right]+\varepsilon_{x}\psi.\end{equation}
It is well known that in 1D in the presence of a random potential
with probability one all the states are exponentially localized \cite{Anderson1958,Ishii1973,Lee1985,Lifshits1988}.
Consequently, diffusion is suppressed and in particular a wavepacket
that is initially localized will not spread to infinity. This has
been very recently extended to the many-body particle system \cite{Aizenman2008,Basko2006,Basko2007}.
This is the phenomenon of Anderson localization. In 2D it is known
heuristically from the scaling theory of localization \cite{Abrahams1979,Lee1985}
that all the states are localized, while in higher dimensions there
is a mobility edge that separates localized and extended states. This
problem is relevant for experiments in nonlinear optics, for example
disordered photonic lattices \cite{Schwartz2007}, where Anderson
localization was found in presence of nonlinear effects as well as
experiments on BECs in disordered optical lattices \cite{Gimperlein2005,Lye2005,Clement2005,Clement2006,Sanchez-Palencia2007,Billy2008,Fort2005,Akkermans2008,Paul2007}.
The interplay between disorder and nonlinear effects leads to new
interesting physics \cite{Fort2005,Akkermans2008,Bishop1995,Rasmussen1999,Kopidakis1999,Kopidakis2000}.
In spite of the extensive research, many fundamental problems are
still open, and in particular, it is not clear whether in one dimension
(1D) Anderson localization can survive the effects of nonlinearities
(see however \cite{Wang2008,Wang2008a}).

A natural question is whether a wave packet that is initially localized
in space will indefinitely spread for dynamics controlled by (\ref{eq:NLSE}).
A simple argument indicates that spreading will be suppressed by randomness.
If unlimited spreading takes place the amplitude of the wave function
will decay since the $L^{2}$ norm is conserved. Consequently, the
nonlinear term will become negligible and Anderson localization will
take place as a result of the randomness. Contrary to this intuition,
based on the smallness of the nonlinear term resulting from the spread
of the wave function, it is claimed that for the kicked-rotor a nonlinear
term leads to delocalization if it is strong enough \cite{Shepelyansky1993}.
It is also argued that the same mechanism results in delocalization
for the model \eqref{eq:NLSE} with sufficiently large $\beta$, while,
for weak nonlinearity, localization takes place \cite{Shepelyansky1993,Pikovsky2008}.
Therefore, it is predicted in that work that there is a critical value
of $\beta$ that separates the occurrence of localized and extended
states. However, if one applies the arguments of \cite{Shepelyansky1993,Pikovsky2008}
to a variant of \eqref{eq:NLSE}, results that contradict numerical
solutions are found \cite{Mulansky2009,Veksler2009}. Recently, it
was rigorously shown that the initial wavepacket cannot spread so
that its amplitude vanishes at infinite time, at least for large enough
$\beta$ \cite{Kopidakis2008}. It does not contradict spreading of
a fraction of the wavefunction. Indeed, subdiffusion was found in
numerical experiments \cite{Shepelyansky1993,Kopidakis2008,Molina1998}.
In different works \cite{Molina1998,Flach2009,Skokos2009} sub-diffusion
was reported for all values of $\beta$. It was also argued that nonlinearity
may enhance discrete breathers \cite{Kopidakis1999,Kopidakis2000}.
In conclusion, it is \emph{not} clear what is the long time behavior
of a wave packet that is initially localized, if both nonlinearity
and disorder are present. The major difficulty in numerical resolution
of this question is integration of \eqref{eq:NLSE} to large time.
Most researchers who run numerical simulation use a split-step method
for integration, however it is impossible to achieve convergence for
large times, and therefore some heuristic arguments assuming that
the numerical errors do not affect the results qualitatively, are
utilized \cite{Shepelyansky1993,Flach2009}. However it is unclear
whether those arguments apply to \eqref{eq:NLSE}. The motivation
of the current work is to propose a numerical scheme based on a modified
perturbation theory developed in \cite{Fishman2008a,Fishman2009a}
which will allow integration of \eqref{eq:NLSE} and similar equations
up to large times and with some control of the error based on the
form of the remainder term obtained in \cite{Fishman2009a}.

The advantage of the perturbative method is that it provides an estimate
of the error while there is no such estimate in the split-step method.
Moreover the error in the split-step method is expected to proliferate
as a result of the nonlinearity.

In Section \ref{sec:Pert_theory} we briefly review the perturbation
theory developed in \cite{Fishman2009a}. In Section \ref{sec:The-numerical-method}
we explain the numerical scheme used to compute the different orders
in the perturbation theory. In Section \ref{sec:The-remainder} we
show how the error of the perturbation theory could be controlled.
In Section \ref{sec:Results} we present the comparison between the
perturbation theory and an exact integration. The results are summarized
in Section \ref{sec:Summary} and open problems are listed there.

In summary, this work demonstrates a numerical implementation of the
perturbation theory for \eqref{eq:NLSE} in powers of $\beta$ which
was described in \cite{Fishman2009a}, and evaluates its possible
use.

\section{\label{sec:Pert_theory}The perturbation theory}

Our goal is to analyze the nonlinear Schrödinger equation \eqref{eq:NLSE}
that could be written in the form\begin{equation}
i\partial_{t}\psi=H_{0}\psi+\beta\left\vert \psi\right\vert ^{2}\psi\label{eq:GPE}\end{equation}
 where $H_{0}$ is the Anderson Hamiltonian,\begin{equation}
H_{0}\psi\left(x\right)=-J\left[\psi\left(x+1\right)+\psi\left(x-1\right)\right]+\varepsilon_{x}\psi\left(x\right).\end{equation}
The wavefunction can be expanded using the eigenstates, $u_{m}\left(x\right)$,
and eigenvalues, $E_{m}$, of $H_{0}$ as \begin{equation}
\psi\left(x,t\right)=\sum_{m}c_{m}\left(t\right)e^{-iE_{m}t}u_{m}\left(x\right).\label{eq:expansion}\end{equation}
For the nonlinear equation the dependence of the expansion coefficients,
$c_{n}\left(t\right),$ is found by inserting this expansion into
\eqref{eq:GPE}, resulting in\begin{equation}
i\partial_{t}c_{n}=\beta\sum_{m_{1},m_{2},m_{3}}V_{n}^{m_{1}m_{2}m_{3}}c_{m_{1}}^{\ast}c_{m_{2}}c_{m_{3}}e^{i\left(E_{n}+E_{m_{1}}-E_{m_{2}}-E_{m_{3}}\right)t}\label{eq:c_n_exact}\end{equation}
 where $V_{n}^{m_{1}m_{2}m_{3}}$ is an overlap sum\begin{equation}
V_{n}^{m_{1}m_{2}m_{3}}=\sum_{x}u_{n}\left(x\right)u_{m_{1}}\left(x\right)u_{m_{2}}\left(x\right)u_{m_{3}}\left(x\right).\label{eq:overlap_int}\end{equation}
Our objective is to develop a perturbation expansion of the $c_{m}\left(t\right)$
in powers of $\beta$ and to calculate them order by order in $\beta.$
The required expansion is\begin{equation}
c_{n}\left(t\right)=c_{n}^{\left(0\right)}+\beta c_{n}^{\left(1\right)}+\beta^{2}c_{n}^{\left(2\right)}+\cdots+\beta^{N}c_{n}^{\left(N\right)}+Q_{n},\label{eq:cn_expand}\end{equation}
 where the expansion is till order $N$ and $Q_{n}$ is the remainder
term (note here $Q_{n}$ differs from one defined in \cite{Fishman2009a}).
We will assume the initial condition\begin{equation}
c_{n}\left(t=0\right)=\delta_{n0}.\end{equation}
 The equations for the two leading orders are presented in what follows.
The leading order is\begin{equation}
c_{n}^{\left(0\right)}=\delta_{n0}.\label{eq:cn0}\end{equation}
And the first order is\begin{equation}
c_{n}^{\left(1\right)}=V_{n}^{000}\left(\frac{1-e^{i\left(E_{n}-E_{0}\right)t}}{E_{n}-E_{0}}\right).\end{equation}
We notice that divergence of this expansion for any value of $\beta$
may result from three major problems: the secular terms problem, the
entropy problem (i.e., factorial proliferation of terms), and the
small denominators problem. In this paper we will not discuss the
entropy problem and the problem of small denominators, since it was
done in detail in \cite{Fishman2009a}. We first show how to derive
the equations for $c_{n}\left(t\right)$ where the secular terms are
eliminated. To achieve this we replace the ansatz \eqref{eq:expansion}
by

\begin{equation}
\psi\left(x,t\right)={\displaystyle \sum\limits _{n}}c_{n}\left(t\right)e^{-iE_{n}^{\prime}t}u_{n}\left(x\right)\label{eq:expansion_prime}\end{equation}
 where\begin{equation}
E_{n}^{\prime}\equiv E_{n}^{\left(0\right)}+\beta E_{n}^{\left(1\right)}+\beta^{2}E_{n}^{\left(2\right)}+\cdots\label{eq:En_expand}\end{equation}
 and $E_{n}^{\left(0\right)}$ are the eigenvalues of $H_{0}$. We
will dub $E_{n}'$ the renormalized energies. The new equation for
the $c_{n}$ is given by

\begin{equation}
i\partial_{t}c_{n}=\left(E_{n}^{\left(0\right)}-E_{n}^{\prime}\right)c_{n}+\beta\sum_{m_{1}m_{2}m_{3}}V_{n}^{m_{1}m_{2}m_{3}}c_{m_{1}}^{\ast}c_{m_{2}}c_{m_{3}}e^{i\left(E_{n}^{\prime}+E_{m_{1}}^{\prime}-E_{m_{2}}^{\prime}-E_{m_{3}}^{\prime}\right)t}.\label{eq:diff_eq}\end{equation}
Inserting expansions \eqref{eq:cn_expand} and \eqref{eq:En_expand}
into \eqref{eq:diff_eq} and comparing the powers of $\beta$ \emph{without
expanding the exponent} in $\beta$, produces the following equation
for the $k-th$ order\begin{align}
i\partial_{t}c_{n}^{\left(k\right)} & =-\sum_{s=0}^{k-1}E_{n}^{\left(k-s\right)}c_{n}^{\left(s\right)}+\label{eq:a2}\\
 & +\sum_{m_{1}m_{2}m_{3}}V_{n}^{m_{1}m_{2}m_{3}}\left[\sum_{r=0}^{k-1}\sum_{s=0}^{k-1-r}\sum_{l=0}^{k-1-r-s}c_{m_{1}}^{\left(r\right)\ast}c_{m_{2}}^{\left(s\right)}c_{m_{3}}^{\left(l\right)}\right]e^{i\left(E_{n}^{\prime}+E_{m_{1}}^{\prime}-E_{m_{2}}^{\prime}-E_{m_{3}}^{\prime}\right)t}.\nonumber \end{align}
Note that the exponent is of order $O\left(1\right)$ in $\beta$,
and therefore we may choose not to expand it in powers of $\beta$.
However, it results in an expansion where both $E_{m}^{\left(l\right)}$
and $c_{n}^{\left(k\right)}$ depend on $\beta$. For the expansion
\eqref{eq:cn_expand} to be valid, both $E_{m}^{\left(l\right)}$
and $c_{n}^{\left(k\right)}$ should be $O\left(1\right)$ in $\beta$,
this is satisfied, since the RHS of \eqref{eq:a2} contains only $c_{n}^{\left(r\right)}$
such that $r<k$. Namely, this equation gives each order in terms
of the lower ones, with the initial condition of $\left.c_{n}^{\left(0\right)}\left(t\right)=\delta_{n0}\right..$
Solution of $k$ equations \eqref{eq:a2} gives the solution of the
differential equation \eqref{eq:diff_eq} to order $k$, while the
higher order terms which are obtained from this equation are meaningless
(see Appendix for the reasoning). Since, the exponent in \eqref{eq:a2}
is of order $O\left(1\right)$ in $\beta$ we can select its argument
to be of any order in $\beta$. However, for the removal of the secular
terms, as will be explained bellow, it is instructive to set the order
of the argument to be $k-1$, as the higher orders were not calculated
at this stage. Secular terms are created when there are time independent
terms in the RHS of the equation above. We eliminate those terms by
using the first two terms in the first summation on the RHS. We make
use of the fact that $c_{n}^{\left(0\right)}=\delta_{n0}$ and $c_{n}^{\left(1\right)}$
can be easily determined (see (\ref{eq:c01},\ref{eq:cn1-1})), and
used to calculate $E_{n=0}^{\left(k\right)}$ and $E_{n\neq0}^{\left(k-1\right)}$
that eliminate the secular terms in the equation for $c_{n}^{\left(k\right)},$
that is \begin{equation}
E_{n}^{\left(k\right)}c_{n}^{\left(0\right)}+E_{n}^{\left(k-1\right)}c_{n}^{\left(1\right)}=E_{n}^{\left(k\right)}\delta_{n0}+E_{n}^{\left(k-1\right)}\left(1-\delta_{n0}\right)\frac{V_{n}^{000}}{E_{n}^{\prime}-E_{0}^{\prime}},\label{eq:sec_elim}\end{equation}
where only the time-independent part of $c_{n}^{\left(1\right)}$
was used. In other words, we choose $E_{n}^{\left(k\right)}$ and
$E_{n\neq0}^{\left(k-1\right)}$ so that the time-independent terms
on the RHS of (\ref{eq:a2}) are eliminated. $E_{0}^{\left(k\right)}$
will eliminate all secular terms with $n=0,$ and $E_{n}^{\left(k-1\right)}$
will eliminate all secular terms with $n\neq0.$ In the following,
we will demonstrate this procedure for the first order.

In the first order of the expansion in $\beta$ we obtain\begin{align}
i\partial_{t}c_{n}^{\left(1\right)} & =-E_{n}^{\left(1\right)}c_{n}^{\left(0\right)}+\sum_{m_{1}m_{2}m_{3}}V_{n}^{m_{1}m_{2}m_{3}}c_{m_{1}}^{\ast\left(0\right)}c_{m_{2}}^{\left(0\right)}c_{m_{3}}^{\left(0\right)}e^{i\left(E_{n}'+E_{m_{1}}'-E_{m_{2}}'-E_{m_{3}}'\right)t}\label{eq:cn1-1}\\
 & =-E_{n}^{\left(1\right)}\delta_{n0}+V_{n}^{000}e^{i\left(E_{n}'-E_{0}'\right)t}.\nonumber \end{align}
 For $n=0$ the equation produces a secular term\begin{align}
i\partial_{t}c_{0}^{\left(1\right)} & =-E_{0}^{\left(1\right)}+V_{0}^{000}\label{eq:c01}\\
c_{0}^{\left(1\right)} & =it\cdot\left(E_{0}^{\left(1\right)}-V_{0}^{000}\right).\nonumber \end{align}
 Setting \begin{equation}
E_{0}^{\left(1\right)}=V_{0}^{000}\end{equation}
 eliminates this secular term and gives \begin{equation}
c_{0}^{\left(1\right)}=0.\end{equation}
 For $n\neq0$ there are no secular terms in this order, therefore
finally\begin{equation}
c_{n}^{\left(1\right)}=\left(1-\delta_{n0}\right)V_{n}^{000}\left(\frac{1-e^{i\left(E_{n}'-E_{0}'\right)t}}{E_{n}'-E_{0}'}\right),\label{eq:cn1}\end{equation}
where to this order $E_{n}'=E_{n}$ and $E'_{0}=E_{0}$.

The higher order terms in the perturbation theory are given by recursive
relations and due to the large number of terms which are involved
will be calculated numerically.

\section{\label{sec:The-numerical-method}The numerical method}

In order to compute the various orders in the perturbation theory
we use equation \eqref{eq:a2}, which is a recursive equation of the
orders. To compute order $k$ we have to compute all $c_{n}^{\left(l\right)}$
and $E_{n}^{\left(l\right)}$ for $l\leq k-1$. The numerical calculation
is done in two stages: at the first stage a symbolic calculation of
the expressions of all the $c_{n}^{\left(l\right)}$ and $E_{n}^{\left(l\right)}$
is performed, this has a complexity of $O\left(e^{2k}\right)$, which
is due to the increasing number of terms in each expression for $c_{n}^{\left(l\right)}$
(see \cite{Fishman2009a}). This stage does not depend neither on
the realization nor the nonlinearity strength, $\beta$. In the second
stage realizations and $\beta$ are chosen and $c_{n}^{\left(l\right)}$
and $E_{n}^{\left(l\right)}$ are calculated. This stage has a complexity
of $O\left(e^{2k}\cdot L^{k}\right)$, where $L$ is the dimension
of the lattice. The computation of this stage could be fully parallelized.

When calculating $E_{n}^{\left(l\right)}$ we encounter self-consistent
equations of the type\begin{equation}
E_{n}^{'}=f_{n}\left(\left\{ E_{m}^{'}\right\} \right),\end{equation}
where $f$ is some function, for example for the second order\begin{equation}
E_{n}^{\prime}=E_{n}^{\left(0\right)}+\beta V_{n}^{n00}\left(2-\delta_{n0}\right)-3\beta^{2}\delta_{n0}\sum_{m\neq0}\frac{\left(V_{m}^{000}\right)^{2}}{E_{m}^{\prime}-E_{0}^{\prime}}.\label{eq:en_oder2}\end{equation}
Higher order equations are required in general. We solve those equations
numerically by reinserting the LHS into the RHS, until a desired convergence
is achieved. The first iteration is done by setting $\beta=0$ at
the RHS. Basically, at each iteration an order of $\beta$ is gained
in the accuracy of the solution and since we need to know $E_{n}^{'}$
only to a desired order $N$ (see Appendix), only a small number of
iterations is needed. The $c_{n}^{\left(l\right)}$ are represented
as vectors with elements $\left(c_{n,\omega_{1}}^{\left(l\right)},c_{n,\omega_{2}}^{\left(l\right)},\ldots\right)$
identified by frequencies such that terms with same frequencies are
grouped together (by summing their amplitudes), namely, $c_{n,\omega_{k}}^{\left(l\right)}=\sum_{j}c_{n,\omega_{k},j}^{\left(l\right)}e^{-i\omega_{k}t}$,
where $\omega_{k}$ is a shared frequency. Due to the fact that most
of the amplitudes are negligible, after grouping a thresholding step
is done and terms which are smaller than $10^{-6}$ are eliminated.
The error introduced by the tresholding can be easily controlled,
since we know how many frequencies were left out. By having the vector
of frequencies and their corresponding amplitudes we can calculate
the perturbative solution at any time. Even after grouping and thresholding
the number of frequencies is growing rapidly with the order of the
expansion.

\section{\label{sec:The-remainder}The remainder of the expansion}

In order to control the solution we have to control, $Q_{n}$, the
remainder of the expansion \eqref{eq:cn_expand} that can be written
in the form\begin{equation}
c_{n}\left(t\right)=\tilde{c}_{n}+Q_{n},\label{eq:remainder_def}\end{equation}
with \begin{eqnarray}
\tilde{c}_{n} & = & \sum_{l=0}^{N}\beta^{l}c_{n}^{\left(l\right)}.\label{eq:c_tilde_def}\end{eqnarray}
It is useful to define\begin{eqnarray}
\tilde{\psi}\left(x,t\right) & = & \sum_{m}\tilde{c}_{m}u_{m}\left(x\right)e^{-iE_{m}^{'}t}\label{eq:psi_tilde_def}\end{eqnarray}
and\begin{equation}
\tilde{Q}_{n}=Q_{n}e^{-iE_{n}^{'}t}.\label{eq:Q_tilde_def}\end{equation}
Substituting \eqref{eq:remainder_def} in \eqref{eq:diff_eq} leads
to the following equation for the remainder which is expressed in
terms of \eqref{eq:c_tilde_def}, \eqref{eq:psi_tilde_def} and \eqref{eq:Q_tilde_def},
\begin{equation}
i\partial_{t}\tilde{Q}_{n}=W_{n}\left(t\right)+\sum_{m}M_{nm}\left(t\right)\tilde{Q}_{m}+\sum_{m}\bar{M}_{nm}\left(t\right)\tilde{Q}_{m}^{*}+F\left(\tilde{Q}\right)\label{eq:bootstrap}\end{equation}
where\begin{eqnarray}
W_{n}\left(t\right) & = & \left(E_{n}^{\left(0\right)}-E_{n}^{'}\right)\tilde{c}_{n}e^{-iE_{n}^{'}t}-i\left(\partial_{t}\tilde{c}_{n}\right)e^{-iE_{n}^{'}t}\\
 & + & \beta\sum_{x}u_{n}\left(x\right)\left|\tilde{\psi}\left(x\right)\right|^{2}\tilde{\psi}\left(x\right)\nonumber \end{eqnarray}
is the inhomogeneous term,\begin{eqnarray}
M_{nm}\left(t\right) & = & E_{n}^{\left(0\right)}\delta_{nm}+2\beta\sum_{x}u_{n}\left(x\right)\left|\tilde{\psi}\left(x\right)\right|^{2}u_{m}\left(x\right)\end{eqnarray}
and\begin{equation}
\bar{M}_{nm}\left(t\right)=\beta\sum_{x}u_{n}\left(x\right)\left(\tilde{\psi}\left(x\right)\right)^{2}u_{m}\left(x\right).\end{equation}
determine the linear terms, while the nonlinear term is,\begin{eqnarray}
F\left(\bar{Q}\right) & = & \beta\sum_{x}u_{n}\left(x\right)\tilde{\psi}^{*}\left(x\right)\left(\sum_{m}\tilde{Q}_{m}u_{m}\left(x\right)\right)^{2}\\
 & + & 2\beta\sum_{x}u_{n}\left(x\right)\tilde{\psi}\left(x\right)\left|\sum_{m}\tilde{Q}_{m}u_{m}\left(x\right)\right|^{2}\nonumber \\
 & + & \beta\sum_{x}u_{n}\left(x\right)\left|\sum_{m}\tilde{Q}_{m}u_{m}\left(x\right)\right|^{2}\left(\sum_{m}\tilde{Q}_{m}u_{m}\left(x\right)\right).\nonumber \end{eqnarray}
The linear part of \eqref{eq:bootstrap} is given by\begin{equation}
i\partial_{t}\tilde{Q}_{n}^{lin}=W_{n}\left(t\right)+\sum_{m}M_{nm}\left(t\right)\tilde{Q}_{m}^{lin}+\sum_{m}\bar{M}_{nm}\left(t\right)\left(\tilde{Q}_{m}^{lin}\right)^{*}.\label{eq:bootstrap_lin}\end{equation}

Using a bootstrap argument, which utilizes the continuity of \eqref{eq:bootstrap}
and smallness of the linear part of \eqref{eq:bootstrap_lin} one
can show \cite{Fishman2009a} that until some time $t_{*}$ , the
dynamics of \eqref{eq:bootstrap} is governed by the linear part and
the remainder is bounded by, \begin{eqnarray}
\left|Q_{n}\left(t\right)\right| & \leq & A\cdot t\cdot e^{-\gamma\left\vert n\right\vert },\label{eq:t0_definition}\end{eqnarray}
where $\gamma$ is the inverse localization length. Therefore to estimate
the remainder we can integrate \eqref{eq:bootstrap_lin} instead of
\eqref{eq:bootstrap} at least up to $t_{*}$. It is useful to integrate
up to some large time, $t\ll t_{*},$ and then to extrapolate using
the linear bound \eqref{eq:t0_definition} up to $t_{*}$. In the
next section it will be proposed how to determine $t_{*}$ in practice.

\section{\label{sec:Results}Results}

In this section it will be demonstrated, how the numerical scheme
for calculations in the framework of the perturbation theory, is implemented
in practice. Some results will be compared with an exact numerical
solution of the original equation \eqref{eq:NLSE}.

For this purpose we have calculated numerically all the $c_{n}^{\left(l\right)}$
and $E_{n}^{'}$ for $l\leq4$ for a certain realization of the random
potential. To compare perturbation theory results to the exact results,
we compute their Fourier transform for different orders of expansion.
On Fig. \ref{fig:FFT} we see the Fourier transform of $\bar{c}_{0}$,
$\bar{c}_{1}$ and $\bar{c}_{9}$ (see \eqref{eq:c_tilde_def}) compared
to the Fourier transform of an exact (numerical) solution, $c_{n}$,
calculated using a split-step method. We notice a reasonable agreement
of the perturbation theory with an exact solution for $\bar{c}_{0}$,
$\bar{c}_{1}$ and a disagreement for $\bar{c}_{9}$ . %
\begin{figure}[tbh]
\subfloat{\includegraphics[width=0.7\textwidth]{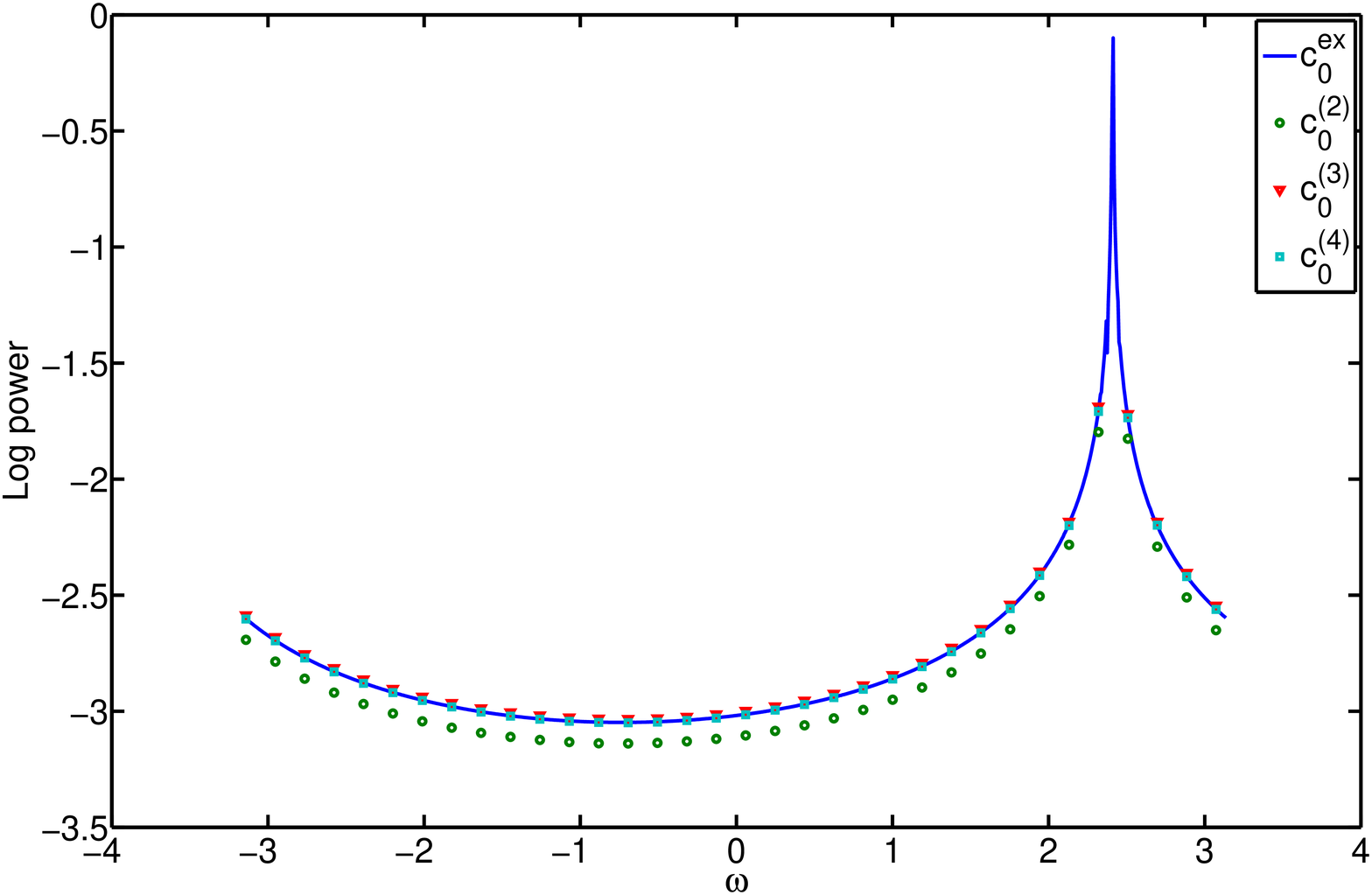}}\hfill{}\subfloat{\includegraphics[width=0.7\textwidth]{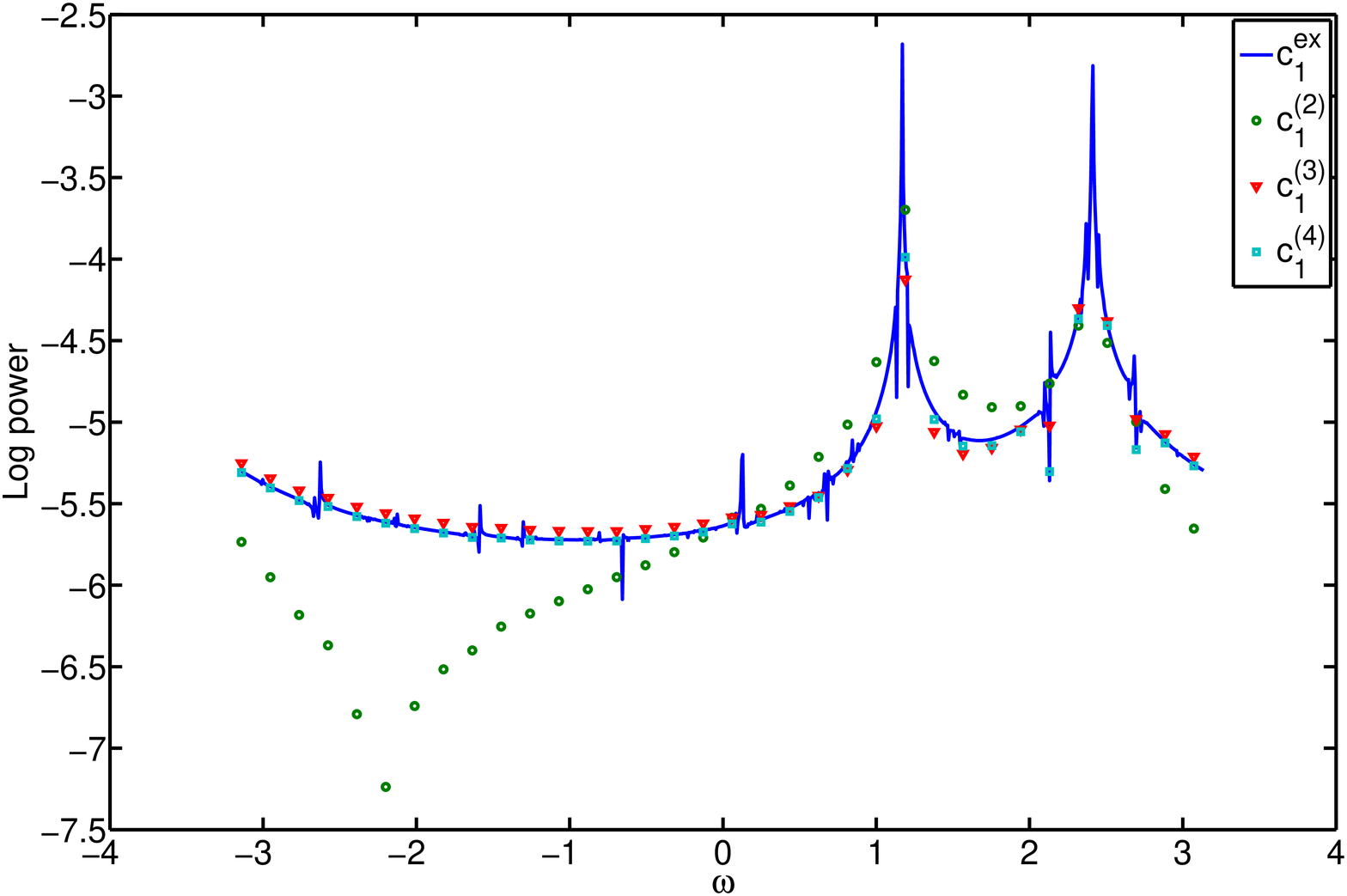}

}\hfill{}\subfloat{\includegraphics[width=0.7\linewidth]{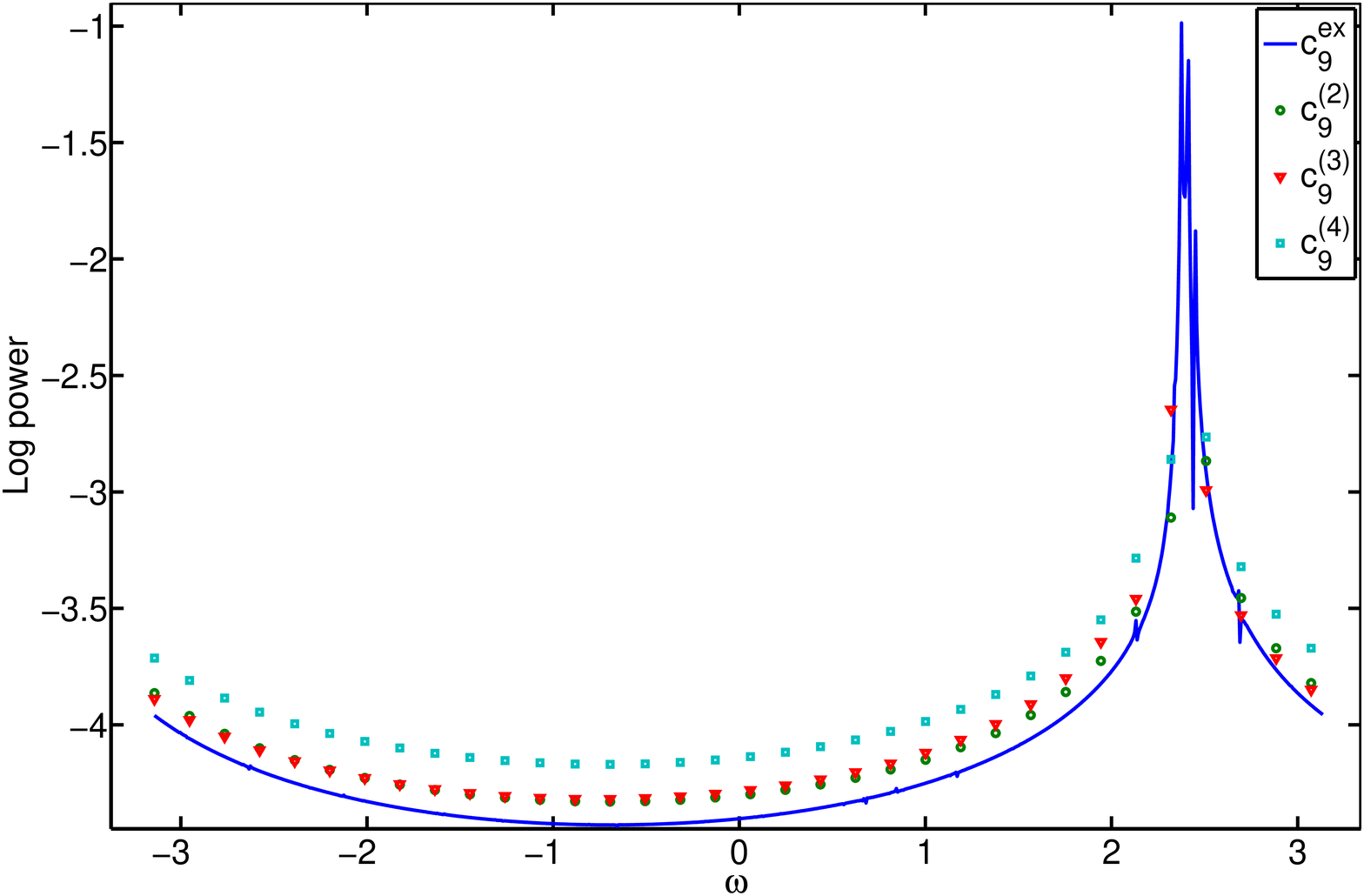}

}

\caption{\label{fig:FFT}Fourier transform of $\bar{c}_{0}$, $\bar{c}_{1}$
and $\bar{c}_{9}$ for different orders of the perturbation theory
compared to the Fourier transform of an exact solution, $c_{n}$,
(solid line). Second order is given by green crosses, third order
by red triangles and fourth order by blue squares. The parameters
are: $\beta=0.0774$, $t=1000$, $W=4$, $J=1$.}

\end{figure}
By plotting $Q_{n}^{lin}\left(t\right)$ for all $n$ with the same
scale (on the same axis) in Fig. \ref{fig:all_Qs}, we see that there
are modes (on Fig. \ref{fig:all_Qs} there are two of them) which
contribute to most of the discrepancy in the perturbation theory calculation,
since if $Q_{n}^{lin}\left(t\right)$ is large the bound on $Q_{n}\left(t\right)$
is also large.%
\begin{figure}[H]
\includegraphics[width=0.8\textwidth]{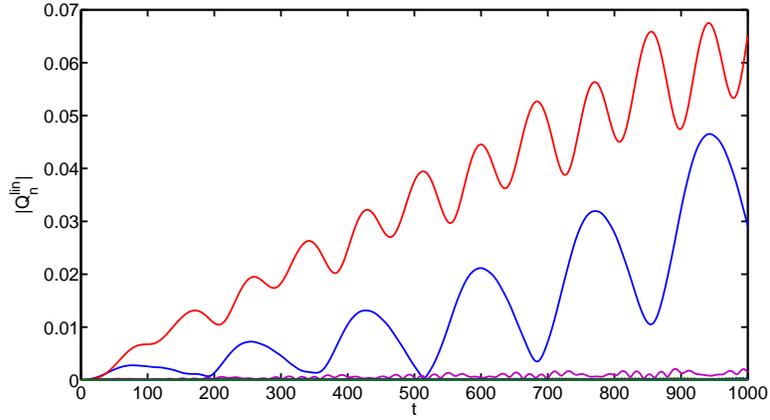}\caption{\label{fig:all_Qs}$Q_{n}^{lin}$ as a function of time in the 4th
order in $\beta$ for all $n$'s of the lattice (total 128 lines).
The two lines that are far above the rest (which are barely visible)
correspond to the resonant modes, $n=4,9$. The parameters are: $\beta=0.0774$,
$W=4$, $J=1$.}

\end{figure}
 We will call those modes resonant modes. In Fig. \ref{fig:Q_Qbar}
we compare a norm of $Q_{n}$, calculated with the resonant modes,
\begin{equation}
\left\Vert Q\right\Vert _{2}=\left(\sum_{m}\left|Q_{m}\right|^{2}\right)^{1/2},\end{equation}
and without them $\left\Vert Q'\right\Vert _{2}$. %
\begin{figure}[H]
\includegraphics[width=0.8\textwidth]{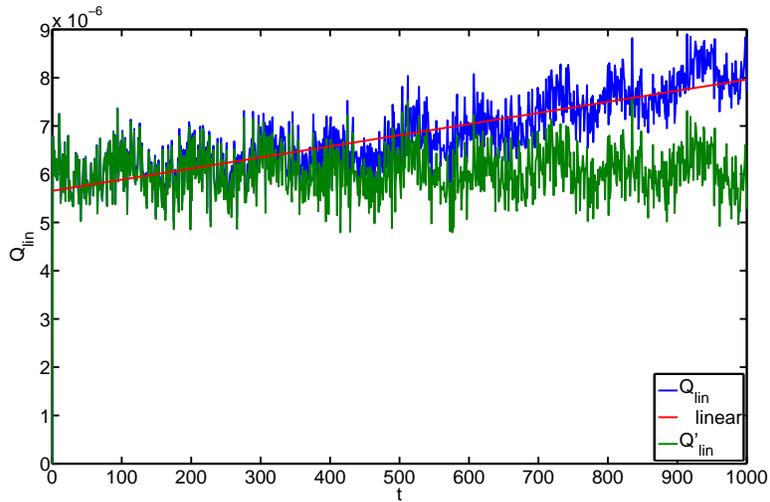}\caption{\label{fig:Q_Qbar}$\left\Vert Q\right\Vert _{2}$ (dashed blue) and
$\left\Vert Q'\right\Vert _{2}$ (solid green) as a function of time.
The straight line is a linear fit to $\left\Vert Q\right\Vert _{2}$.
The parameters are: 4th order, $\beta=0.01$, $W=4$, $J=1$. The
linear fit is: $\left\Vert Q\right\Vert _{2}\left(t\right)=2.3\times10^{-9}\cdot t+5.7\times10^{-6}$. }

\end{figure}
 It is found that indeed the $\left\Vert Q\right\Vert _{2}$ grows
linearly with time as expected from \eqref{eq:t0_definition}. Actually,
this is the way a bound on the resonant terms is expected to behave
\cite{Fishman2009a}. It is evident that by excluding the resonant
modes the discrepancy between the perturbation theory and exact results
is much lower. The reason why the perturbation theory fails for the
resonant modes is that they correspond to a quasi-degeneracy, namely,
when $E_{n}^{\prime}\thickapprox E_{0}^{\prime}$, and the overlap
$\left|V_{n}^{000}\right|$ is not sufficiently small. A natural way
to quantify the resonance condition is to use (compare to Eq.(5) of
\cite{Flach2009}),\begin{equation}
R_{n}^{-1}\equiv\left|\frac{V_{n}^{000}}{E_{n}^{\prime}-E_{0}^{\prime}}\right|.\label{eq:R}\end{equation}
The resonant modes produce substantially higher values of $R_{n}^{-1}$
compared to any other modes as demonstrated in Fig. \ref{fig:R}.
\begin{figure}[H]
\includegraphics[width=0.8\textwidth]{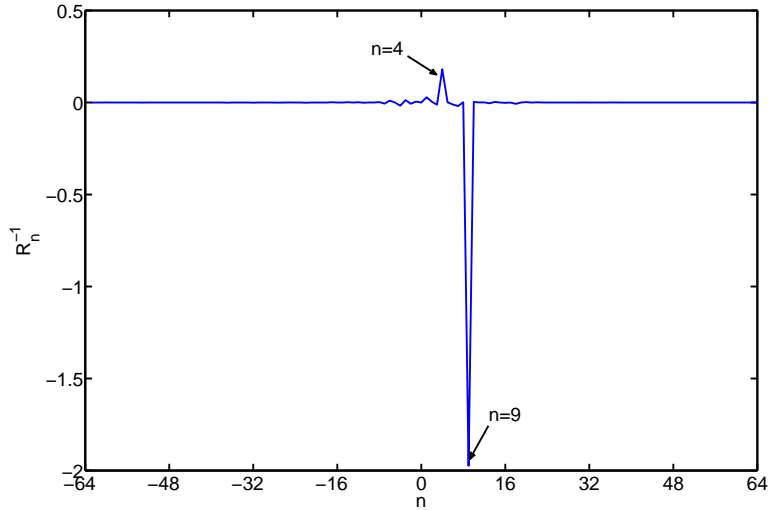}\caption{\label{fig:R}$R_{n}^{-1}$ as a function of $n$}

\end{figure}
 One way to deal with this problem is by using a degenerate perturbation
theory. However it is an open question how to implement it for a nonlinear
problem, that should be left for further studies.

At the end of the last section we have claimed that for some time,
$t_{*},$ the linear part of \eqref{eq:bootstrap} dominates over
the nonlinear part given that the linear part is sufficiently small.
In Fig. \ref{fig:Qex_Qlin} we present a comparison between the solution
of the linear equation \eqref{eq:bootstrap_lin} and $Q_{ex}$, the
solution of \eqref{eq:bootstrap}. It is clear that until $\left\Vert Q_{lin}\right\Vert _{2}\sim0.1$
both the solutions are very close.%
\begin{figure}[H]
\includegraphics[width=0.8\textwidth]{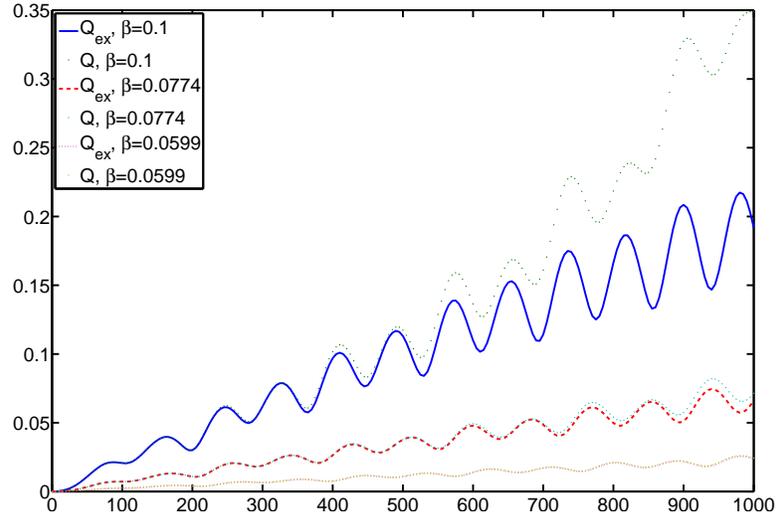}\caption{\label{fig:Qex_Qlin}$\left\Vert Q_{ex}\right\Vert _{2}$ (lines)
and $\left\Vert Q_{lin}\right\Vert _{2}$ (dotes) as a function of
time, for various values of $\beta$ (see legend). The perturbation
expansion is up to fourth order in $\beta$.}

\end{figure}
 We use this value to define $t_{*}$,\begin{equation}
\left\Vert Q_{lin}\left(t_{*}\right)\right\Vert _{2}=0.1.\label{eq:t_star_def}\end{equation}
For small nonlinearity strength, $\beta$, $t_{*}$ is very large
and therefore the integration of \eqref{eq:bootstrap_lin} to $t_{*}$
is very time consuming. We therefore use the bound \eqref{eq:t0_definition}
to extrapolate linearly from the time interval where \eqref{eq:bootstrap_lin}
is solved to $t_{*}$. Practically, we have calculated the linear
behavior of $Q_{lin}$ like it was done in Fig. \ref{fig:Q_Qbar}
than we found $t_{*}$ from \eqref{eq:t_star_def}. In Fig. \ref{fig:t_star}
we plot $\log_{10}t_{*}$ as a function of $\beta^{-1}$ for different
orders, while in Fig. \ref{fig:t_star_prime} we compare the result
found in 4-th order with $\left|Q_{n}^{lin}\right|$ of \eqref{eq:t_star_def}
is replaced by $\left|Q_{n}^{lin'}\right|$ (where the resonant terms
are removed) %
\begin{figure}[H]
\includegraphics[width=0.8\textwidth]{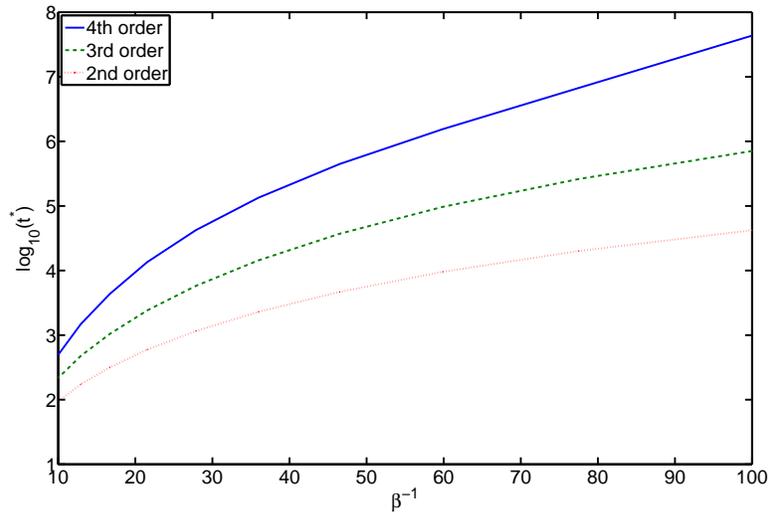}\caption{\label{fig:t_star}$\log_{10}t_{*}$ as a function of $\beta^{-1}$
for different orders. 4th order (solid blue), 3rd order (dashed green)
and 2nd order (dotted red). The parameters are: $W=4$, $J=1$.}

\end{figure}
\begin{figure}[H]
\includegraphics[width=0.8\textwidth]{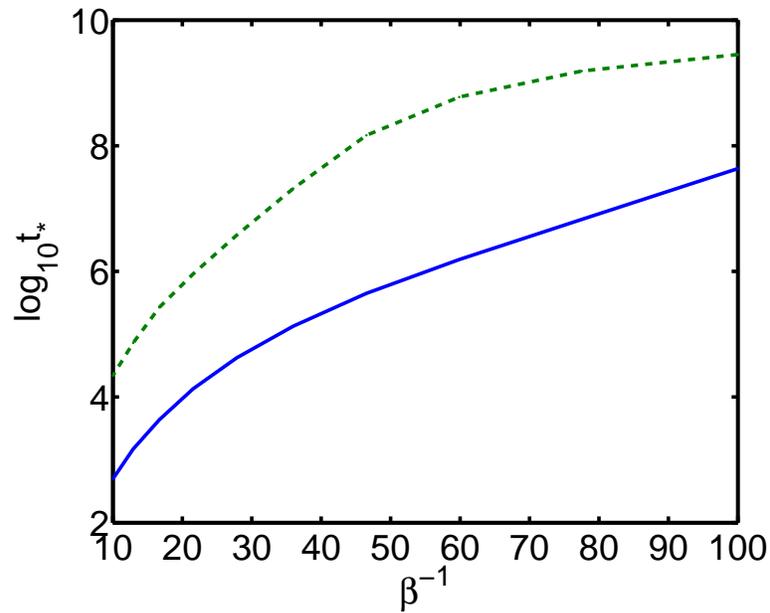}\caption{\label{fig:t_star_prime}$\log_{10}t_{*}$ as a function of $\beta^{-1}$
for the 4-th order.. Solid blue line is when $t_{*}$ is calculated
based on $\left|Q_{n}^{lin}\right|$ and dashed green line is when
$\left|Q_{n}^{lin'}\right|$ is used instead. The parameters are:
$W=4$, $J=1$.}

\end{figure}
 A systematic improvement with the order of the perturbation theory
is found. We notice that even with moderate nonlinearity strengths,
namely, $\beta<0.08$, one can achieve a good approximation of the
solution up to very large times. In this way the perturbation theory
combined with the solution of the linear equation \eqref{eq:bootstrap_lin}
and the criterion \eqref{eq:t_star_def} may be used to obtain the
solution of the original equation \eqref{eq:NLSE} up to $t<t_{*}$.
For small $\beta$ the time $t_{*}$ is very long as is clear from
Fig. \ref{fig:t_star}. When one considers the smallness of $\beta$
one should consider actually the smallness of $\beta e^{2}$ due to
the exponential proliferation of the number of terms (see Eq. 4.6
of \cite{Fishman2009a}).

Assuming that the $Q_{n}'$ give a good approximaion of the $Q_{n}$
for $t<t_{*}$, we can use Fig. \ref{fig:t_star_prime} to conclude
that, for example, for $\beta=0.1$ our solution is meaningful up
to time $t=10^{4.3}$ and therefore can be whithin reach of works
like \cite{Flach2009,Skokos2009}. The assumption obove corresponds
to the observation that correcting $c_{4}$ and $c_{9}$ by $Q_{4}$
and $Q_{9}$ is not changed much over time, and therefore all the
$Q_{n}^{'}$ stay small for a very long time. Note, that since the
largest localization length for $W=4$ is approximatelly 6, positions
4 and 9 are within one localization length from the initial data.
Therefore their removal does not affect the behavior for large $n$
that is relevant for the asymptotics.

Additional benifit of this implemamtnation is the computational speed.
For example, even for the 3rd order of the pertrurbation theory one
can compute the perturbative solution up to $t<10^{4}$ in 10 minutes
while using the same computer an exact convergent integration (split-step
method) takes 28 hours, which is a two orders of magnitude speed-up.

\section{\label{sec:Summary}Summary and Discussion}

In this paper we have demonstrated how a perturbation theory can be
numerically implemented for the solution of the NLSE with a random
potential \eqref{eq:NLSE}. In the perturbative method the computer
is used to implement the symbolic recursive calculation of the various
terms and also for their numerical evaluation. This method allows
to estimate the errors since the remainder term is bounded. It was
demonstrated (in the end of the previous section) how evaluate the
time of validity of the perturbation theory. This approach has also
a great advantage in the speed of the calculation. We believe that
the method can be generalized to other nonlinear differential equations,
for example, the Fermi-Ulam-Pasta problem.

In order to make the perturbation theory of a much greater value one
has to solve the problem of resonant terms (when $R_{n}^{-1}$ of
\eqref{eq:R} is large). In other words the degenerate perturbation
theory should be extended to nonlinear equations. Also the asymptotic
nature of the perturbation theory is not known, it is not clear when
it is convergent and when only asymptotic.

We enjoyed many extensive illuminating and extremely critical discussions
with Michael Aizenman. We also had informative discussions with S.
Aubry, V. Chulaevski, S. Flach, I. Goldshield, M. Goldstein, I. Guarneri,
M. Sieber, W.-M. Wang and S. Warzel. This work was partly supported
by the Israel Science Foundation (ISF), by the US-Israel Binational
Science Foundation (BSF), by the USA National Science Foundation (NSF
DMS-0903651), by the Minerva Center of Nonlinear Physics of Complex
Systems, by the Shlomo Kaplansky academic chair, by the Fund for promotion
of research at the Technion and by the E. and J. Bishop research fund.

\section*{Appendix}

In the calculations of the paper we used the expansion \eqref{eq:expansion_prime}
of the wavefunction and the expansion coefficients $c_{n}\left(t\right)$
are calculated from \eqref{eq:cn_expand} and \eqref{eq:a2}. The
$c_{n}^{\left(l\right)}\left(t\right)$ depend on $\beta$ since the
$E_{n}^{'}$ in \eqref{eq:a2} depend on it. Therefore \eqref{eq:cn_expand}
is not a Taylor series in $\beta$. In this Appendix it is shown that
this expansion is equivalent to the Taylor series,\[
c_{n}=\sum_{l=0}^{\infty}\bar{c}_{n}^{\left(l\right)}\beta^{l}.\]
Contrary to the $c_{n}^{\left(l\right)}$ the $\bar{c}_{n}^{\left(l\right)}$
are independent of $\beta$. The expansion of \eqref{eq:diff_eq}
in powers of $\beta$ produces the following equation for the $r$-th
order\begin{align}
i\partial_{t}\bar{c}_{n}^{\left(r\right)} & =-\sum_{l=1}^{r}\sum_{l_{1}=0}^{l}\frac{1}{l_{1}!}\left(\frac{\partial^{l_{1}}E_{n}^{\left(l-l_{1}\right)}}{\partial\beta^{l_{1}}}\right)_{\beta=0}\bar{c}_{n}^{\left(r-l\right)}+\label{eq:c}\\
 & +\sum_{\sum_{i}l_{i}=r-1}\sum_{\left\{ m_{i}\right\} }V_{n}^{m_{1}m_{2}m_{3}}\bar{c}_{m_{1}}^{*\left(l_{1}\right)}\bar{c}_{m_{2}}^{\left(l_{2}\right)}\bar{c}_{m_{3}}^{\left(l_{3}\right)}\left(\frac{1}{l_{4}!}\frac{\partial^{l_{4}}}{\partial\beta^{l_{4}}}e^{i\left(E'_{n}+E'_{m_{1}}-E'_{m_{2}}-E'_{m_{3}}\right)t}\right)_{\beta=0}.\nonumber \end{align}
We considered a different expansion \eqref{eq:a2}, where one does
not expand the exponent, but still compares the implicit powers of
$\beta$ from both sides of the equation. The equation for the $\left(r-k\right)$
order in this expansion is (that is just \eqref{eq:a2}) \begin{align}
i\partial_{t}c_{n}^{\left(r-k\right)} & =-\sum_{l=1}^{r-k}E_{n}^{\left(l\right)}c_{n}^{\left(r-k-l\right)}\label{eq:c_bar}\\
 & +\sum_{\left\{ m_{i}\right\} }\sum_{l_{1}+l_{2}+l_{3}=r-k-1}V_{n}^{m_{1}m_{2}m_{3}}c_{m_{1}}^{*\left(l_{1}\right)}c_{m_{2}}^{\left(l_{2}\right)}c_{m_{3}}^{\left(l_{3}\right)}e^{i\left(E'_{n}+E'_{m_{1}}-E'_{m_{2}}-E'_{m_{3}}\right)t}.\nonumber \end{align}
It can be shown that each term in this expansion is uniformly bounded
in time, with a proper selection of $E_{n}'$, that is the secular
terms are removed. We will show that this expansion is equivalent
to a Taylor series.
\begin{thm}
For any $t,$ \begin{equation}
c_{n}=\sum_{k=0}^{\infty}c_{n}^{\left(k\right)}\beta^{k},\end{equation}
where $c_{n}^{\left(k\right)}$ are defined using \eqref{eq:c_bar}.\end{thm}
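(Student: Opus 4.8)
The plan is to prove the identity not by manipulating the individual coefficients directly, but by appealing to uniqueness of solutions of the renormalized evolution equation \eqref{eq:diff_eq}. Concretely, I would introduce the resummed quantity $C_{n}(\beta,t):=\sum_{k=0}^{\infty}c_{n}^{\left(k\right)}(t)\beta^{k}$ and show that, as a function of $t$, it satisfies exactly \eqref{eq:diff_eq} together with the initial datum $C_{n}(0)=\delta_{n0}$. Since $c_{n}(t)$ is by definition the solution of \eqref{eq:diff_eq} with that same initial condition, uniqueness for this ODE system (equivalently, well-posedness of the original NLSE \eqref{eq:NLSE}) then forces $C_{n}=c_{n}$, which is the assertion. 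The initial condition for $C_{n}$ is immediate from the construction, since $c_{n}^{\left(0\right)}=\delta_{n0}$ and $c_{n}^{\left(k\right)}(0)=0$ for all $k\ge1$.

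The heart of the argument is the term-by-term resummation of the recursion \eqref{eq:c_bar} (equivalently \eqref{eq:a2}). Multiplying the equation for order $k$ by $\beta^{k}$ and summing over $k\ge0$, I would treat the two groups of terms on the right separately. For the energy terms, reindexing the convolution $\sum_{k}\beta^{k}\sum_{l=1}^{k}E_{n}^{\left(l\right)}c_{n}^{\left(k-l\right)}$ by $(l,s)=(l,k-l)$ factors it as $\bigl(\sum_{l\ge1}\beta^{l}E_{n}^{\left(l\right)}\bigr)C_{n}$, and by the definition \eqref{eq:En_expand} of the renormalized energy the bracket is exactly $E_{n}'-E_{n}^{\left(0\right)}$; this reproduces the linear term $(E_{n}^{\left(0\right)}-E_{n}')C_{n}$ of \eqref{eq:diff_eq}. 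For the cubic terms, the crucial point is that the exponential in \eqref{eq:c_bar} carries the \emph{full}, unexpanded renormalized phases $E_{m}'$, so it is a common factor independent of the summation indices. Writing $k=l_{1}+l_{2}+l_{3}+1$ turns the constrained sum into an unconstrained Cauchy product, which reassembles (using that $\beta$ is real, so $\sum_{l}\beta^{l}c_{m}^{*\left(l\right)}=C_{m}^{*}$) into $\beta\sum_{m_{1}m_{2}m_{3}}V_{n}^{m_{1}m_{2}m_{3}}C_{m_{1}}^{*}C_{m_{2}}C_{m_{3}}\,e^{i(E_{n}'+E_{m_{1}}'-E_{m_{2}}'-E_{m_{3}}')t}$, which is precisely the nonlinear term of \eqref{eq:diff_eq}. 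Combining the two pieces shows $C_{n}$ satisfies \eqref{eq:diff_eq}, completing the uniqueness argument.

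The main obstacle is not algebraic but analytic: every step above --- differentiating $C_{n}$ term by term in $t$, interchanging the $t$-derivative with the infinite $\beta$-sum, and performing the reindexings and the Cauchy product --- is only legitimate if the series $\sum_{k}c_{n}^{\left(k\right)}(t)\beta^{k}$, together with the triple product of such series, converges absolutely and uniformly on the relevant time interval. This is delicate here because the $c_{n}^{\left(k\right)}$ are themselves $\beta$-dependent through the $E_{m}'$, so $C_{n}$ is not a genuine power series in $\beta$ and the manipulations are really rearrangements of a doubly-indexed family. I would therefore present the computation first at the level of formal series, where the identity is exact, and then note that its validity as a statement about actual convergent functions rests on the convergence/asymptotics question flagged in the main text as open; a fully rigorous version would require the entropy and small-denominator estimates of \cite{Fishman2009a} to control $\sum_{k}|c_{n}^{\left(k\right)}|\,|\beta|^{k}$. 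Finally, the equivalence with the ordinary Taylor series $\sum_{l}\bar{c}_{n}^{\left(l\right)}\beta^{l}$ claimed in the surrounding text follows as a corollary, since that series solves \eqref{eq:diff_eq} order by order and hence also sums to $c_{n}$.
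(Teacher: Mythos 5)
Your proposal is correct as a formal-series argument, but it follows a genuinely different route from the paper. The paper never resums the series: it first invokes uniqueness of the Taylor expansion of $c_{n}$ to reduce the theorem to the coefficient identity $\bar{c}_{n}^{\left(r\right)}=\sum_{k=0}^{r-1}\frac{1}{k!}\bigl(\partial_{\beta}^{k}c_{n}^{\left(r-k\right)}\bigr)_{\beta=0}$, and then proves that identity by induction on $r$: the operator $\sum_{k=0}^{r-1}\frac{1}{k!}\partial_{\beta}^{k}\big|_{\beta=0}$ is applied to both sides of \eqref{eq:c_bar}, and via the generalized Leibniz rule and elaborate changes of summation variables (the matrices $A_{1},B_{1},A_{2},B_{2}$) the result is matched against the equation \eqref{eq:c} satisfied by the true Taylor coefficients $\bar{c}_{n}^{\left(r\right)}$. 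You instead resum the recursion directly --- using exactly the right structural facts, namely that the convolution $\sum_{l\geq1}\beta^{l}E_{n}^{\left(l\right)}$ reproduces $E_{n}'-E_{n}^{\left(0\right)}$ by \eqref{eq:En_expand}, and that the unexpanded exponential is independent of the order indices so the cubic sum is a Cauchy product --- and then close with ODE uniqueness for \eqref{eq:diff_eq}. Your version is shorter and makes transparent \emph{why} leaving the exponent unexpanded is harmless, which the paper's index gymnastics obscure; what the paper's coefficient-by-coefficient induction buys in exchange is that its content (the order-by-order agreement of the two expansions) is a purely algebraic statement that retains meaning even if the series is only asymptotic --- a relevant distinction here, since the paper itself states that convergence versus asymptoticity of the expansion is open, and your uniqueness step genuinely requires the resummed series to converge to a differentiable function of $t$. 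That said, the paper's own proof quietly assumes the same kind of thing (convergence of the Taylor series to $c_{n}$ and rearrangeability of the double series), so your explicitly flagged analytic caveat does not put you below the paper's level of rigor; you should just state, as you do, that the identity is established at the level of formal power series, with convergence controlled by the estimates of \cite{Fishman2009a}.
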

\begin{proof}
Since $c_{n}=\sum_{k=0}^{\infty}\bar{c}_{n}^{\left(k\right)}\beta^{k}$
is the expansion of $c_{n}$ in powers of $\beta$, and due to the
uniqueness of this expansion, the theorem is true if and only if\begin{equation}
\bar{c}_{n}^{\left(r\right)}=\sum_{k=0}^{r}\frac{1}{k!}\left(\frac{\partial^{k}c_{n}^{\left(r-k\right)}}{\partial\beta^{k}}\right)_{\beta=0}=\sum_{k=0}^{r-1}\frac{1}{k!}\left(\frac{\partial^{k}c_{n}^{\left(r-k\right)}}{\partial\beta^{k}}\right)_{\beta=0},\end{equation}
where at the last equality we have used the fact that $\frac{\partial^{r}c_{n}^{\left(0\right)}}{\partial\beta^{r}}=0,$
for any $r\geq1.$ We proceed to prove this equality by induction.
Suppose that this equality is true for all the orders till order $r$.
We apply the linear operator $\sum_{k=0}^{r-1}\frac{1}{k!}\frac{\partial^{k}}{\partial\beta^{k}}$
to both sides of \eqref{eq:c_bar}. For the first part on the RHS
we get\begin{align}
\sum_{k=0}^{r-1}\frac{1}{k!}\sum_{l=1}^{r-k}\frac{\partial^{k}}{\partial\beta^{k}}\left(E_{n}^{\left(l\right)}c_{n}^{\left(r-k-l\right)}\right) & =\nonumber \\
 & =\sum_{k=0}^{r-1}\frac{1}{k!}\sum_{l=1}^{r-k}\sum_{l_{1}=0}^{k}\frac{k!}{l_{1}!\left(k-l_{1}\right)!}\frac{\partial^{l_{1}}E_{n}^{\left(l\right)}}{\partial\beta^{l_{1}}}\frac{\partial^{\left(k-l_{1}\right)}c_{n}^{\left(r-k-l\right)}}{\partial\beta^{\left(k-l_{1}\right)}}\nonumber \\
 & =\sum_{k=0}^{r-1}\sum_{l=1}^{r-k}\sum_{l_{1}=0}^{k}\left(\frac{1}{l_{1}!}\frac{\partial^{l_{1}}E_{n}^{\left(l\right)}}{\partial\beta^{l_{1}}}\right)\left(\frac{1}{\left(k-l_{1}\right)!}\frac{\partial^{\left(k-l_{1}\right)}c_{n}^{\left(r-k-l\right)}}{\partial\beta^{\left(k-l_{1}\right)}}\right)\label{eq:first_sum}\end{align}
where at the first equality we have used the Leibniz generalized product
rule. We now exchange variables such that\begin{eqnarray}
l_{1} & = & l_{1}\\
l_{2} & = & k-l_{1}\nonumber \\
z & = & l+l_{1}\nonumber \end{eqnarray}
or in a matrix notation\begin{equation}
\left(\begin{array}{c}
l_{1}\\
l_{2}\\
z\end{array}\right)=A_{1}\left(\begin{array}{c}
l_{1}\\
l\\
k\end{array}\right),\end{equation}
where\[
A_{1}=\left(\begin{array}{ccc}
1 & 0 & 0\\
-1 & 0 & 1\\
1 & 1 & 0\end{array}\right).\]
The region of summation is bounded by the planes\begin{eqnarray}
0 & \leq & k\leq r-1\\
1 & \leq & l\leq r-k\nonumber \\
0 & \leq & l_{1}\leq k\nonumber \end{eqnarray}
or in matrix notation\begin{equation}
B_{1}\left(\begin{array}{c}
l_{1}\\
l\\
k\end{array}\right)\leq\left(\begin{array}{c}
r-1\\
r\\
0\\
0\\
-1\\
0\end{array}\right),\end{equation}
where\[
B_{1}=\left(\begin{array}{ccc}
0 & 0 & 1\\
0 & 1 & 1\\
1 & 0 & -1\\
-1 & 0 & 0\\
0 & -1 & 0\\
0 & 0 & -1\end{array}\right).\]
Changing the variables by applying the transformation matrix produces\begin{equation}
B_{1}\cdot A_{1}^{-1}\left(\begin{array}{c}
l_{1}\\
l_{2}\\
z\end{array}\right)=\left(\begin{array}{ccc}
1 & 1 & 0\\
0 & 1 & 1\\
0 & -1 & 0\\
-1 & 0 & 0\\
1 & 0 & -1\\
-1 & -1 & 0\end{array}\right)\left(\begin{array}{c}
l_{1}\\
l_{2}\\
z\end{array}\right)\leq\left(\begin{array}{c}
r-1\\
r\\
0\\
0\\
-1\\
0\end{array}\right)\end{equation}
or the following inequalities\begin{eqnarray}
l_{1}+l_{2} & \leq & r-1\\
l_{2}+z & \leq & r\nonumber \\
0 & \leq & l_{2}\nonumber \\
0 & \leq & l_{1}\nonumber \\
l_{1} & \leq & z-1\nonumber \\
0 & \leq & l_{1}+l_{2}.\nonumber \end{eqnarray}
Removing redundant inequalities gives\begin{eqnarray}
0\leq l_{1} & \leq & z-1\\
1 & \leq & z\leq r\nonumber \\
0 & \leq & l_{2}\leq r-z.\nonumber \end{eqnarray}
Therefore in the new variables the sum \eqref{eq:first_sum} takes
the form\begin{equation}
\sum_{z=1}^{r}\sum_{l_{1}=0}^{z-1}\sum_{l_{2}=0}^{r-z}\left(\frac{1}{l_{1}!}\frac{\partial^{l_{1}}E_{n}^{\left(z-l_{1}\right)}}{\partial\beta^{l_{1}}}\right)\left(\frac{1}{l_{2}!}\frac{\partial^{l_{2}}c_{n}^{\left(r-z-l_{2}\right)}}{\partial\beta^{l_{2}}}\right),\end{equation}
 Since $\frac{\partial^{z}E_{n}^{\left(0\right)}}{\partial\beta^{z}}=0$
for any $z\geq1,$we can write\begin{equation}
\sum_{z=1}^{r}\sum_{l_{1}=0}^{z}\sum_{l_{2}=0}^{r-z}\left(\frac{1}{l_{1}!}\frac{\partial^{l_{1}}E_{n}^{\left(z-l_{1}\right)}}{\partial\beta^{l_{1}}}\right)\left(\frac{1}{l_{2}!}\frac{\partial^{l_{2}}c_{n}^{\left(r-z-l_{2}\right)}}{\partial\beta^{l_{2}}}\right).\end{equation}
 Taking $\beta=0$ and using the assumption of the induction for orders
lower than $r$ we have\begin{equation}
\sum_{z=1}^{r}\sum_{l_{1}=0}^{z}\left(\frac{1}{l_{1}!}\frac{\partial^{l_{1}}E_{n}^{\left(z-l_{1}\right)}}{\partial\beta^{l_{1}}}\right)\bar{c}_{n}^{\left(r-z\right)},\end{equation}
which is the first expression for $i\partial_{t}\bar{c}_{n}^{\left(r\right)}$.
The proof for the second term is similar, operating with $\sum_{k=0}^{r-1}\frac{1}{k!}\frac{\partial^{k}}{\partial\beta^{k}}$
on the second term in \eqref{eq:c_bar} gives \begin{equation}
\sum_{\left\{ m_{i}\right\} }V_{n}^{m_{1}m_{2}m_{3}}\sum_{k=0}^{r-1}\sum_{l_{1}=0}^{r-k-1}\sum_{l_{2}=0}^{r-k-1-l_{1}}\frac{1}{k!}\frac{\partial^{k}}{\partial\beta^{k}}\left(c_{m_{1}}^{*\left(l_{1}\right)}c_{m_{2}}^{\left(l_{2}\right)}c_{m_{3}}^{\left(r-k-1-l_{1}-l_{2}\right)}e^{i\left(E'_{n}+E'_{m_{1}}-E'_{m_{2}}-E'_{m_{3}}\right)t}\right).\end{equation}
 Using the Leibniz generalized product rule, which states\begin{equation}
\frac{\partial^{k}}{\partial\beta^{k}}\left(x_{1}x_{2}x_{3}x_{4}\right)=\sum_{s_{1}+s_{2}+s_{3}+s_{4}=k}\frac{k!}{s_{1}!s_{2}!s_{3}!s_{4}!}\frac{\partial^{s_{1}}x_{1}}{\partial\beta^{s_{1}}}\frac{\partial^{s_{2}}x_{2}}{\partial\beta^{s_{2}}}\frac{\partial^{s_{3}}x_{3}}{\partial\beta^{s_{3}}}\frac{\partial^{s_{4}}x_{4}}{\partial\beta^{s_{4}}}\end{equation}
we have the sum\begin{align}
 & \sum_{\left\{ m_{i}\right\} }V_{n}^{m_{1}m_{2}m_{3}}\sum_{k=0}^{r-1}\sum_{l_{1}=0}^{r-k-1}\sum_{l_{2}=0}^{r-k-1-l_{1}}\sum_{s_{1}=0}^{k}\sum_{s_{2}=0}^{k-s_{1}}\sum_{s_{3}=0}^{k-s_{1}-s_{2}}\left(\frac{1}{s_{1}!}\frac{\partial^{s_{1}}c_{m_{1}}^{*\left(l_{1}\right)}}{\partial\beta^{s_{1}}}\right)\left(\frac{1}{s_{2}!}\frac{\partial^{s_{2}}c_{m_{2}}^{\left(l_{2}\right)}}{\partial\beta^{s_{2}}}\right)\times\nonumber \\
 & \times\left(\frac{1}{s_{3}!}\frac{\partial^{s_{3}}c_{m_{3}}^{\left(r-k-1-l_{1}-l_{2}\right)}}{\partial\beta^{s_{3}}}\right)\left(\frac{1}{\left(k-s_{1}-s_{2}-s_{3}\right)!}\frac{\partial^{\left(k-s_{1}-s_{2}-s_{3}\right)}}{\partial\beta^{\left(k-s_{1}-s_{2}-s_{3}\right)}}e^{i\left(E'_{n}+E'_{m_{1}}-E'_{m_{2}}-E'_{m_{3}}\right)t}\right)\end{align}
Following the first part of this proof we exchange the variable to
\begin{eqnarray}
z_{1} & = & l_{1}+s_{1}\\
z_{2} & = & l_{2}+s_{2}\nonumber \\
z_{3} & = & k-s_{1}-s_{2}-s_{3}\nonumber \\
s_{i} & = & s_{i}\nonumber \end{eqnarray}
or using a transformation matrix\begin{equation}
\left(\begin{array}{c}
z_{1}\\
z_{2}\\
z_{3}\\
s_{1}\\
s_{2}\\
s_{3}\end{array}\right)=A_{2}\left(\begin{array}{c}
k\\
l_{1}\\
l_{2}\\
s_{1}\\
s_{2}\\
s_{3}\end{array}\right),\end{equation}
where\[
A_{2}=\left(\begin{array}{cccccc}
0 & 1 & 0 & 1 & 0 & 0\\
0 & 0 & 1 & 0 & 1 & 0\\
1 & 0 & 0 & -1 & -1 & -1\\
0 & 0 & 0 & 1 & 0 & 0\\
0 & 0 & 0 & 0 & 1 & 0\\
0 & 0 & 0 & 0 & 0 & 1\end{array}\right).\]
The region of summation is bounded by the following hyperplanes\begin{eqnarray}
0 & \leq & k\leq r-1\\
0 & \leq & l_{1}\leq r-k-1\nonumber \\
0 & \leq & l_{2}\leq r-k-1-l_{1}\nonumber \\
0 & \leq & s_{1}\leq k\nonumber \\
0 & \leq & s_{2}\leq k-s_{1}\nonumber \\
0 & \leq & s_{3}\leq k-s_{1}-s_{2}\nonumber \end{eqnarray}
which could be represented in a matrix notation as\begin{equation}
B_{2}\left(\begin{array}{c}
k\\
l_{1}\\
l_{2}\\
s_{1}\\
s_{2}\\
s_{3}\end{array}\right)\leq\left(\begin{array}{c}
r-1\\
r-1\\
r-1\\
0\\
0\\
0\\
0\\
0\\
0\\
0\\
0\\
0\end{array}\right),\end{equation}
where\[
B_{2}=\left(\begin{array}{cccccc}
1 & 0 & 0 & 0 & 0 & 0\\
1 & 1 & 0 & 0 & 0 & 0\\
1 & 1 & 1 & 0 & 0 & 0\\
-1 & 0 & 0 & 1 & 0 & 0\\
-1 & 0 & 0 & 1 & 1 & 0\\
-1 & 0 & 0 & 1 & 1 & 1\\
-1 & 0 & 0 & 0 & 0 & 0\\
0 & -1 & 0 & 0 & 0 & 0\\
0 & 0 & -1 & 0 & 0 & 0\\
0 & 0 & 0 & -1 & 0 & 0\\
0 & 0 & 0 & 0 & -1 & 0\\
0 & 0 & 0 & 0 & 0 & -1\end{array}\right).\]
Using the transformation matrix to change the variables results in
the following \begin{equation}
\left(\begin{array}{cccccc}
0 & 0 & 1 & 1 & 1 & 1\\
1 & 0 & 1 & 0 & 1 & 1\\
1 & 1 & 1 & 0 & 0 & 1\\
0 & 0 & -1 & 0 & -1 & -1\\
0 & 0 & -1 & 0 & 0 & -1\\
0 & 0 & -1 & 0 & 0 & 0\\
0 & 0 & -1 & -1 & -1 & -1\\
-1 & 0 & 0 & 1 & 0 & 0\\
0 & -1 & 0 & 0 & 1 & 0\\
0 & 0 & 0 & -1 & 0 & 0\\
0 & 0 & 0 & 0 & -1 & 0\\
0 & 0 & 0 & 0 & 0 & -1\end{array}\right)\left(\begin{array}{c}
z_{1}\\
z_{2}\\
z_{3}\\
s_{1}\\
s_{2}\\
s_{3}\end{array}\right)\leq\left(\begin{array}{c}
r-1\\
r-1\\
r-1\\
0\\
0\\
0\\
0\\
0\\
0\\
0\\
0\\
0\end{array}\right)\end{equation}
or in the following inequalities\begin{eqnarray}
z_{3}+s_{1}+s_{2}+s_{3} & \leq & r-1\\
z_{1}+z_{3}+s_{2}+s_{3} & \leq & r-1\nonumber \\
z_{1}+z_{2}+z_{3}+s_{3} & \leq & r-1\nonumber \\
0 & \leq & z_{3}+s_{2}+s_{3}\nonumber \\
0 & \leq & z_{3}+s_{3}\nonumber \\
0 & \leq & z_{3}\nonumber \\
0 & \leq & z_{3}+s_{1}+s_{2}+s_{3}\nonumber \\
s_{1} & \leq & z_{1}\nonumber \\
s_{2} & \leq & z_{2}\nonumber \\
0 & \leq & s_{i}\nonumber \end{eqnarray}
 Removing redundant inequalities gives\begin{eqnarray}
0 & \leq & s_{1}\leq z_{1}\\
0 & \leq & s_{2}\leq z_{2}\nonumber \\
0 & \leq & s_{3}\leq\left(r-1\right)-z_{1}-z_{2}-z_{3}\nonumber \\
0 & \leq & z_{1}\leq\left(r-1\right)-z_{3}\nonumber \\
0 & \leq & z_{2}\leq\left(r-1\right)-z_{1}-z_{3}\nonumber \\
0 & \leq & z_{3}\leq r-1\nonumber \end{eqnarray}
which is equivalent to the sum\begin{align}
 & \sum_{\left\{ m_{i}\right\} }V_{n}^{m_{1}m_{2}m_{3}}\sum_{z_{3}=0}^{r-1}\sum_{z_{1}=0}^{r-1-z_{3}}\sum_{z_{2}=0}^{r-1-z_{1}-z_{3}}\sum_{s_{1}=0}^{z_{1}}\sum_{s_{2}=0}^{z_{2}}\sum_{s_{3}=0}^{r-1-z_{1}-z_{2}-z_{3}}\left(\frac{1}{s_{1}!}\frac{\partial^{s_{1}}c_{m_{1}}^{*\left(z_{1}-s_{1}\right)}}{\partial\beta^{s_{1}}}\right)\times\nonumber \\
 & \times\left(\frac{1}{s_{2}!}\frac{\partial^{s_{2}}c_{m_{2}}^{\left(z_{2}-s_{2}\right)}}{\partial\beta^{s_{2}}}\right)\left(\frac{1}{s_{3}!}\frac{\partial^{s_{3}}c_{m_{3}}^{\left(r-1-z_{1}-z_{2}-z_{3}-s_{3}\right)}}{\partial\beta^{s_{3}}}\right)\left(\frac{1}{z_{3}!}\frac{\partial^{z_{3}}}{\partial\beta^{z_{3}}}e^{i\left(E'_{n}+E'_{m_{1}}-E'_{m_{2}}-E'_{m_{3}}\right)t}\right)\end{align}
Putting $\beta=0$ and utilizing the assumption of the induction we
get\begin{equation}
\sum_{l_{1}+l_{2}+l_{3}+l_{4}=r-1}\sum_{\left\{ m_{i}\right\} }V_{n}^{m_{1}m_{2}m_{3}}\bar{c}_{m_{1}}^{\left(l_{1}\right)*}\bar{c}_{m_{2}}^{\left(l_{2}\right)}\bar{c}_{m_{3}}^{\left(l_{3}\right)}\left(\frac{1}{l_{4}!}\frac{\partial^{l_{4}}}{\partial\beta^{l_{4}}}e^{i\left(E'_{n}+E'_{m_{1}}-E'_{m_{2}}-E'_{m_{3}}\right)t}\right)_{\beta=0}\end{equation}
 which is exactly the second term in the equation for $i\partial_{t}\bar{c}_{n}^{\left(r\right)}$.
Since the zero order trivially satisfies this theorem, this competes
the proof by induction.
\end{proof}
One should be able to extend the results of this Appendix to other
nonlinear equations, for example, where the power of the nonlinearity
is different (in the last term of \eqref{eq:NLSE} 2 is replaced by
another integer).

\end{document}